\newtheorem{theorem}{Theorem}
\newcommand{\comment}[1]{}
\begin{document}


\title{Comment on Nahimovs et al. `On the probability of finding marked connected components using quantum walks'}
\author[1,2]{Adam Glos}
\author[3]{Nikolay Nahimovs}
\affil[1]{Institute of Theoretical and Applied Informatics, Polish Academy of 
		Sciences, ul. Ba{\l}tycka 5, 44-100 Gliwice, Poland}
	\affil[2]{Institute of Informatics, Silesian University of Technology,
		ul. Akademicka 16, 44-100 Gliwice, Poland}
\affil[3]{Centre for Quantum Computer Science, Faculty of Computing University of Latvia Riga Latvia}
\date{}

\maketitle


\begin{abstract}
	In this comment paper we present two misconceptions found in paper of Nahimovs et al. \emph{On the probability of finding marked connected components using quantum walks}. First, we show that the Theorem 2 (sufficient and necessary condition for a state to be stationary) is incomplete -- it works only if unmarked vertices form a single connected component. Second, we correct derivation of \emph{a} coefficient in the Theorem 3 (lower bound on the probability) and show how to upper bound value of \emph{a}.
\end{abstract}


\section{Corrections}

\subsection{Theorem 2}

Let $G=(V,E)$ be an undirected graph and $M$ be a connected set of marked vertices. 
In \cite{prusis2016stationary} the authors showed
\begin{theorem}
\label{thm:shortages}
	If $M$ is bipartite,  then we  can  assign amplitudes to neutralise the shortages at each marked vertex if and only if the sums of the shortages on both partite sets are equal.
\end{theorem}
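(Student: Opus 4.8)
The plan is to recognise the neutralisation requirement as a single linear system over the internal edges of $M$ and to read off its solvability from the structure of the incidence matrix of a connected bipartite graph. First I would fix notation: write $A \sqcup B$ for the bipartition of the (connected, bipartite) subgraph induced on $M$, let $x_e$ be the amplitude to be assigned to each edge $e$ internal to $M$, and let $s_v$ be the prescribed shortage at the marked vertex $v$. Each neutralisation condition is linear in the $x_e$, so stacking them over all $v \in M$ gives one linear system $Tx = s$. Here I would argue that neutralising the shortage at $v$ amounts to requiring the amplitudes on the edges incident to $v$ to sum to the prescribed value, i.e. $(Tx)_v = \sum_{e \ni v} x_e = s_v$, so that $T$ is exactly the (unsigned) vertex--edge incidence matrix of the induced subgraph on $M$. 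The question ``can the shortages be neutralised?'' then becomes ``is $s \in \operatorname{im} T$?'', and over a finite-dimensional space this is equivalent to $s \perp \ker T^{\top}$.

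The heart of the argument is to compute $\ker T^{\top}$. A vector $y$ indexed by the marked vertices lies in $\ker T^{\top}$ iff, for every internal edge $\{u,v\}$, the incidence contributions cancel, i.e. $y_u + y_v = 0$; so $y$ must negate across every edge. This is where both hypotheses enter, and in opposite directions: connectivity of $M$ propagates the relation $y_u = -y_v$ across all of $M$, so that $y$ is pinned down by a single free scalar (dimension at most one), while bipartiteness guarantees the relations are consistent around every cycle --- every cycle of $M$ is even, so the sign flips an even number of times and returns to its starting value (dimension at least one). I would then exhibit the generator explicitly as the vector $w$ equal to $+1$ on $A$ and $-1$ on $B$, and verify $T^{\top} w = 0$ directly.

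Finally I would translate the solvability criterion $\langle w, s\rangle = 0$ into the statement of the theorem: since $w$ is $+1$ on $A$ and $-1$ on $B$, this inner product equals $\sum_{v \in A} s_v - \sum_{v \in B} s_v$, so $Tx = s$ is solvable precisely when $\sum_{v \in A} s_v = \sum_{v \in B} s_v$. This settles both implications at once --- equal partite sums yield a consistent system (amplitudes exist), and unequal sums put $s$ outside $\operatorname{im} T$, so no assignment works.

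I expect the second step --- identifying $\ker T^{\top}$ as exactly one-dimensional --- to be the main obstacle, since it is the only place where connectivity and bipartiteness are both indispensable and act against each other: dropping bipartiteness introduces an odd cycle, which forces the alternating assignment to vanish, collapses the kernel to $0$, and would make \emph{every} shortage vector neutralisable with no constraint at all. A secondary point requiring care is fixing the incidence sign convention correctly, so that the kernel generator comes out as the $\pm 1$ two-colouring and the orthogonality condition lands on ``partite sums equal'' rather than on the ``total sum zero'' one would get from an oriented incidence matrix.
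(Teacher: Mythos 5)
Your argument is mathematically sound, but note an important point about what you were asked to prove: the paper you are commenting on does not prove this statement at all. It is Theorem~2 of the cited reference \cite{prusis2016stationary}, imported verbatim as background (the comment paper's own contributions start only with its Theorems~2 and~3), so there is no in-paper proof to compare against. Taken on its own terms, your proof is correct and complete: modelling neutralisation as the linear system $Tx=s$ over the internal edges of $M$, with $T$ the \emph{unsigned} vertex--edge incidence matrix, is faithful to the paper's definition (``finding amplitudes inside the marked component such that the total sum of amplitudes of each marked vertex is $0$''; the sign discrepancy $s$ versus $-s$ is immaterial for solvability). The identification $\operatorname{im} T = (\ker T^{\top})^{\perp}$, the computation that $\ker T^{\top}$ consists of vectors satisfying $y_u = -y_v$ across every edge, and the conclusion that connectivity plus bipartiteness make this kernel exactly the span of the $\pm 1$ two-colouring, are all correct; orthogonality to that generator is precisely ``partite sums equal.'' Your approach also buys something the statement itself does not ask for but the comment paper uses: when $M$ is non-bipartite, an odd cycle kills the alternating vector, the kernel collapses to $\{0\}$, and every shortage vector becomes neutralisable --- which is exactly the non-bipartite half of the quoted Theorem~2 of \cite{khadiev2018probability}. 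The one presentational caveat is that your reduction tacitly uses the fact that in a stationary state the two directed amplitudes on an internal edge coincide (so that a single variable $x_e$ per edge suffices); this is part of the stationarity characterisation in \cite{prusis2016stationary}, and a fully self-contained write-up should cite or reprove it rather than fold it silently into the model.
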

The shortage of a marked vertex is defined to be a sum of amplitudes between the vertex and its
neighbouring unmarked vertices. Neutralising the shortages means finding amplitudes inside the marked component such that the total sum of amplitudes of each marked vertex is $0$. 

Let $M$ be bipartite with partite sets $M_1$ and $M_2$ and let
\begin{equation}
\deg_G(M_i) \coloneqq \sum_{v\in M_i} \deg_G(v),
\end{equation}
be a total outgoing degree of $M_i$, where $\deg_G(v)$ is the degree of $v$ in $G$.
The authors of \cite{khadiev2018probability} claimed that according to the theorem above, the following theorem holds.
\begin{theorem}
	A bipartite marked connected component has a stationary state if and only if $\deg_G(M_1) = \deg_G(M_2)$. A non-bipartite marked connected component always has a stationary state.
 \end{theorem}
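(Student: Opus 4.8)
The plan is to lean on Theorem~\ref{thm:shortages} and reduce the whole statement to a comparison of two linear quantities. Recall that a stationary state exists precisely when the shortages at the marked vertices can be neutralised by amplitudes placed on the arcs internal to $M$, and that Theorem~\ref{thm:shortages} already collapses the bipartite instance of this solvability problem to the single scalar identity that the total shortage on $M_1$ equals the total shortage on $M_2$. The bipartite half of the claim is therefore equivalent to rewriting those two shortage sums in terms of $\deg_G(M_1)$ and $\deg_G(M_2)$. For the non-bipartite half I would argue directly from the neutralisation system: its coefficient matrix is a vertex--edge incidence matrix of the graph internal to $M$, which over $\mathbb{R}$ has full row rank exactly when that graph is non-bipartite. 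Hence for non-bipartite $M$ the system is solvable for every prescribed shortage vector, the parity obstruction of Theorem~\ref{thm:shortages} is absent, and a stationary state exists unconditionally.

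To connect shortages with degrees I would fix one concrete stationary configuration on the unmarked side and read off the shortages it induces. The natural candidate assigns a single common amplitude $c$ to every arc that lies in, or on the boundary of, the unmarked region: at each unmarked vertex the Grover coin fixes the constant local vector and the flip-flop shift preserves it, so this is stationary away from $M$. Writing $\sigma(v)$ for the shortage of a marked vertex $v$, this choice gives $\sigma(v)=c\cdot(\text{number of unmarked neighbours of }v)$. Summing over a partite set and using that $M$ is bipartite (so every edge internal to $M$ joins $M_1$ to $M_2$ and is counted once in each of $\deg_G(M_1)$, $\deg_G(M_2)$), the internal edges cancel in the difference and one obtains
\begin{equation}
\sum_{v\in M_1}\sigma(v)-\sum_{v\in M_2}\sigma(v)=c\bigl(\deg_G(M_1)-\deg_G(M_2)\bigr).
\end{equation}
The shortage sums agree if and only if $\deg_G(M_1)=\deg_G(M_2)$, which is exactly the asserted condition.

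The step I expect to be the real obstacle --- and the one I would interrogate before believing the theorem --- is the tacit assumption that a single scale $c$ governs the entire unmarked side. This is sound only when the unmarked vertices form one connected component, where the uniform state is essentially the unique stationary configuration delivering boundary amplitudes to $M$, so the shortages are genuinely pinned to the degrees. If the unmarked region decomposes into components $U_1,\dots,U_m$, each carries an independent amplitude $c_k$, and the shortage becomes $\sigma(v)=\sum_k c_k\cdot(\text{number of neighbours of }v\text{ in }U_k)$. The balance between $M_1$ and $M_2$ is then a single linear equation in the free parameters $c_1,\dots,c_m$, which can be satisfied even when $\deg_G(M_1)\neq\deg_G(M_2)$ (and violated when they are equal). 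Consequently the equivalence with the degree condition survives only under the single-unmarked-component hypothesis; in general $\deg_G(M_1)=\deg_G(M_2)$ is one special case of the true shortage criterion rather than a characterisation. I therefore anticipate that a careful execution of this plan does not prove the theorem as stated but instead exposes precisely this missing assumption.
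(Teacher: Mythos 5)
Your analysis is correct and lands exactly where the paper does: this statement is the flawed theorem from Khadiev et al.\ that the comment paper refutes rather than proves, and the flaw you isolate --- that a single amplitude $c$ can be assumed to govern the whole unmarked side only when $V\setminus M$ is connected --- is precisely the paper's point. Your further observation that disconnected unmarked regions carry independent amplitudes $c_1,\dots,c_m$, so the shortage balance between $M_1$ and $M_2$ becomes one linear equation that can always be satisfied regardless of whether $\deg_G(M_1)=\deg_G(M_2)$, is exactly the content and the proof of the paper's replacement theorem (written there for two components as $(k_{11}-k_{12})\cdot a_1=(k_{22}-k_{21})\cdot a_2$).
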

It turns out that the theorem is correct only if the induced subgraph spanned by
$V \setminus M$ is connected, as in such scenario all amplitudes of edges incident to an unmarked vertex must be equal. 
However, if $V \setminus M$ is disconnected then amplitudes in its connected components might be different. Therefore, one can always construct a stationary state.

\begin{theorem}
	Let $G=(V,E)$ be connected graph, let $M$ be a connected set of marked vertices and let $V \setminus M$ be disconnected. Then $G$ always has a stationary state.
\end{theorem}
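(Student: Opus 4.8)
The plan is to build a stationary state explicitly, converting the extra connectivity degrees of freedom into enough slack to always neutralise the shortages. I would start from the structural fact recalled just before the statement: the stationarity condition at the unmarked vertices forces the amplitudes of all edges incident to a given unmarked vertex to coincide, and—propagating this equality along edges internal to $V\setminus M$—to be \emph{constant on each connected component} of the subgraph induced by $V\setminus M$. When $V\setminus M$ is connected this leaves a single global constant, and Theorem~\ref{thm:shortages} collapses the existence question to the numerical identity $\deg_G(M_1)=\deg_G(M_2)$. The whole point is that disconnectedness replaces this one constant by several mutually independent ones.

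Concretely, let $C_1,\dots,C_k$ with $k\ge 2$ be the connected components of $V\setminus M$. Since $G$ is connected and $M$ separates it, no edge joins two distinct components $C_i,C_j$, so every edge lies inside $M$, inside a single $C_i$, or between some $C_i$ and $M$. I would attach one free real parameter $a_i$ to $C_i$ and set the amplitude of every edge incident to $C_i$ equal to $a_i$. By the previous paragraph this satisfies the unmarked stationarity condition for \emph{every} choice of $(a_1,\dots,a_k)$: the components are pairwise edge-disjoint, so the analysis of the connected case applies to each $C_i$ in isolation and the constants cannot constrain one another. With this assignment the shortage at a marked vertex $m$ is the linear form $\sigma_m=\sum_{i=1}^k a_i\,e_i(m)$, where $e_i(m)$ counts the edges from $m$ into $C_i$.

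It then remains only to choose the $a_i$ so that the $\sigma_m$ can be neutralised by edges internal to $M$. If $M$ is non-bipartite the neutralisation is unconditional, so any nonzero $(a_i)$ works. If $M$ is bipartite with parts $M_1,M_2$, Theorem~\ref{thm:shortages} makes neutralisation possible exactly when $\sum_{m\in M_1}\sigma_m=\sum_{m\in M_2}\sigma_m$; substituting the linear forms, this is the single homogeneous equation $\sum_{i=1}^k \beta_i a_i=0$ with $\beta_i=\sum_{m\in M_1}e_i(m)-\sum_{m\in M_2}e_i(m)$. A single homogeneous linear equation in $k\ge 2$ unknowns always admits a nonzero solution, and I would pick one with some $a_{i_0}\ne 0$; because $C_{i_0}$ has at least one edge to $M$ (again by connectivity of $G$), the amplitude on that edge is nonzero, so the stationary state obtained after neutralisation is not the zero vector.

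The only real obstacle is bookkeeping rather than a genuine difficulty. I must check that the per-component constant assignment satisfies the unmarked condition verbatim (this uses crucially that distinct components are edge-disjoint, so nothing links them except through the marked vertices) and that the balance identity is genuinely \emph{one} scalar equation while disconnectedness supplies \emph{at least two} independent constants. Once the count ``$\le 1$ equation versus $\ge 2$ unknowns'' is in place, nonemptiness of the solution set—and hence existence of a nonzero stationary state—is immediate, and the hypothesis that $V\setminus M$ is disconnected (i.e.\ $k\ge 2$) is exactly what the argument consumes.
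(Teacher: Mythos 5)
Your proof is correct and follows essentially the same route as the paper's: assign one constant amplitude per connected component of $V\setminus M$, use Theorem~\ref{thm:shortages} to reduce existence to the single balance equation between the shortage sums of the partite sets, and exploit the independence of the per-component constants to solve that equation. The only differences are in completeness rather than method—you handle arbitrarily many components, the non-bipartite case, and the nonvanishing of the resulting state explicitly, whereas the paper writes out the two-component bipartite case and remarks that the general case extends easily.
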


\begin{proof}

For simplicity suppose $V \setminus M$ consist of two connected components $H_1$ and $H_2$. The argument below can be easily extended to arbitrary number of components.

Let $k_{11} = | \{ (i,j) \in E \: | \: i \in H_1, j \in M_1 \} |$ be a number of edges between vertices of $H_1$ and $M_1$ (see Figure \ref{fig:two_components}). Similarly, we define $k_{12}$, $k_{21}$ and $k_{22}$. Let amplitudes of edges incident to vertices of $H_1$ be $a_1$ (all amplitudes must be equal) and amplitudes of edges incident to vertices of $H_2$ be $a_2$.
\begin{figure}
    \centering
    \includegraphics[scale=0.5]{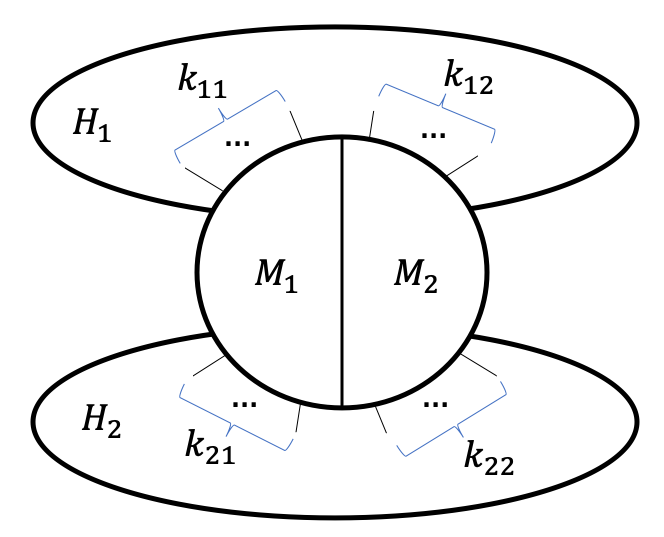}
    \caption{$V \setminus M$ consisting of two connected components $H_1$ and $H_2$}
    \label{fig:two_components}
\end{figure}
The sum of shortages of $M_1$ is
$$
s(M_1) \coloneqq k_{11} \cdot a_1 + k_{21} \cdot a_2
$$
and the sum of shortages of $M_2$ is
$$
s(M_2) \coloneqq k_{12} \cdot a_1 + k_{22} \cdot a_2 .
$$
It follows from Theorem \ref{thm:shortages} that if $s(M_1) = s(M_2)$ one can assign amplitudes to neutralise shortages (i.e. construct a stationary state). Therefore, we need
$$
k_{11} \cdot a_1 + k_{21} \cdot a_2 = k_{12} \cdot a_1 + k_{22} \cdot a_2
$$
or
\begin{equation}
\label{eq:a1_a2}
(k_{11} - k_{12}) \cdot a_1 = (k_{22} -  k_{21}) \cdot a_2 .
\end{equation}
As $a_1$ and $a_2$ are independent it is always possible to choose their values to satisfy the equality.
\end{proof}

Note that the constructed stationary state can have $0$ overlap with the initial state.
As before, let  $V \setminus M$ consist of two connected components $H_1 =(V_1, E_1)$ and $H_2=(V_2, E_2)$.

The initial state has all amplitudes equal to $\frac{1}{\sqrt{2m}}$. The overlap of $M$ with the initial state is $0$ (as the sum of amplitudes for each vertex of $M_1$ is $0$).
Consider overlap of the rest of the graph $|E_1| \cdot a_1 + |E_2| \cdot a_2$.
Let $k_{11} = k_{21}$ and $k_{12} = k_{22}$. 
Then, from \eqref{eq:a1_a2} we have $a_1 = -a_2$.
Therefore, if $|E_1| = |E_2|$ we have $|E_1| \cdot a_1 + |E_2| \cdot a_2 = 0$.


\subsection{Derivation of $a$ in the proof of Theorem 3.}

In the proof of Theorem 3 in \cite{khadiev2018probability} authors claimed that $a = \frac{1}{\sqrt{2m}}$, where $m$ is the number of edges. Authors have defined $a$ through stationary state $\ket{\psi_{ST}^a}$ and have not considered the normalisation of the state, which in turns makes the mentioned equality incorrect in general. While it may be difficult to precisely determine the optimal $a$ value, it is relatively easy to provide upper bound that is satisfactory for small exceptional configurations.

Note that authors have considered only stationary states with the amplitudes $u \to v$ for arbitrary unmarked $u$  and arbitrary $v$ being all equal. Since according to Theorem 2 from \cite{prusis2016stationary} only these amplitudes has impact on overlap between stationary state and initial state, $a$ can be upper-bounded by $\bar a$ of the form
\begin{equation}
\ket{\psi} = \sum_{u \in V \setminus M} \sum_{v \in N(u)} \bar a \ket{u,v}.
\end{equation}
Note that $\ket{\psi}$ does not need to be a stationary state, but presents the worst case scenario of optimal stationary states. Thus
\begin{equation}
\begin{split}
1  = \braket{\psi} &= \sum_{u\in V\setminus M} \sum_{v \in N(u)} \bar a^2\\
&= \bar a^2\sum_{u\in V\setminus M} \deg_G(v)\\
&= \bar a^2(2m - 2|E_M|-D^{\bar M}),
\end{split}
\end{equation}
where $E_M$ is the set of edges between marked vertices ad $D^{\bar M}$ is the number of edges between marked and unmarked vertices. Thus,
\begin{equation}
a \leq \bar a  = \frac{1}{\sqrt{2m - 2|E_M|-D^{\bar M}}}.
\end{equation}

\section{Final statements of Theorems 3}
Finally we present a corrected version of Theorem 3.
\begin{theorem}
Consider a graph $G=(V,E)$ with a component of marked vertices $M$. Let $M$ be such that there exists a stationary state with amplitudes of unmarked$\to$marked arcs being all equal. Then the probability $p_M(t)$ of finding a marked vertex after $t$ steps satisfies 
\begin{equation}
p_M \leq \frac{4}{2m-2|E_M|-D^{\bar M}} \Big ( \sum_{\substack{i,j\in M \\i\sim j}}c_{ij}^2 + 2D^{\bar M} + 2|E_M|\Big ).
\end{equation}
\end{theorem}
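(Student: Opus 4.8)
The plan is to bound the probability $p_M(t)$ of finding a marked vertex by comparing the evolved state against the nearest stationary state and then controlling the relevant quantities. Following the structure already established in \cite{khadiev2018probability}, I would start from the general principle that for any stationary state $\ket{\psi_{ST}}$, the probability of finding a marked vertex is bounded below by the squared overlap of the initial state with $\ket{\psi_{ST}}$, and dually the probability of finding an \emph{unmarked} vertex is controlled by how close the evolution stays to the stationary subspace. The upper bound on $p_M$ therefore arises by lower-bounding the probability mass that must remain on unmarked vertices, which in turn is governed by $1$ minus the squared overlap $|\braket{\psi_{ST}}{\psi_0}|^2$ of the chosen stationary state with the uniform initial state $\ket{\psi_0}$.

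First I would fix the stationary state $\ket{\psi_{ST}}$ guaranteed by hypothesis, in which all unmarked$\to$marked arc amplitudes are equal to the common value whose magnitude we have just bounded. I would write $\ket{\psi_{ST}}$ as a sum over three types of arcs: unmarked$\to$marked arcs (each carrying amplitude $a$), marked$\to$unmarked arcs, and marked$\to$marked arcs (whose amplitudes inside the component I denote $c_{ij}$ for $i\sim j$). The key quantitative step is to compute the overlap $\braket{\psi_{ST}}{\psi_0}$, where $\ket{\psi_0}$ has every amplitude equal to $1/\sqrt{2m}$. Since the marked$\to$unmarked and unmarked$\to$marked arcs pair up across the cut, the contribution of the cut arcs and of the internal marked arcs to the overlap can be collected, and the squared overlap is what ultimately feeds the probability estimate. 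I would then substitute the upper bound $a\le\bar a=1/\sqrt{2m-2|E_M|-D^{\bar M}}$ derived in the previous subsection, which is precisely the source of the prefactor $4/(2m-2|E_M|-D^{\bar M})$ once the factor of $4$ from squaring the $2$-fold arc structure (the factor $2$ appearing in front of both $D^{\bar M}$ and $|E_M|$) is tracked.

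The bracketed term $\sum_{i,j\in M,\,i\sim j}c_{ij}^2 + 2D^{\bar M}+2|E_M|$ should emerge from accounting for the norm contributions of the three arc classes: the $c_{ij}^2$ sum counts the energy stored on internal marked arcs, while the $2D^{\bar M}$ and $2|E_M|$ terms count the doubled arc contributions across the marked--unmarked cut and within the marked set respectively (the doubling reflecting that each edge yields two directed arcs). I would assemble these using the relation $p_M \le \frac{4}{\braket{\psi_{ST}}}\big(\text{overlap-defect terms}\big)$ that underlies the original Theorem~3, with the crucial correction being that the normalisation $\braket{\psi_{ST}}$ is no longer taken to be $2m$ but is instead bounded via $\bar a$.

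The main obstacle I anticipate is the bookkeeping of the normalisation $\braket{\psi_{ST}}$ together with the correct identification of which arcs contribute to the overlap versus which contribute only to the norm. In the original paper this was precisely the point that went wrong --- the tacit assumption $\braket{\psi_{ST}}=2m$ forced $a=1/\sqrt{2m}$. The delicate part is therefore to show that one may legitimately replace the true (possibly larger) normalisation by the bound $2m-2|E_M|-D^{\bar M}$ coming from $\bar a$ \emph{without} overcounting, i.e.\ to verify that the worst-case state $\ket{\psi}$ introduced above genuinely dominates the optimal stationary state in the relevant overlap quantity. Verifying that this substitution only \emph{weakens} the bound (keeping it a valid upper bound on $p_M$) is the step requiring the most care; the remaining manipulations are routine arc-counting.
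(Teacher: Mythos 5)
Your plan is, in substance, the paper's own proof --- which is worth knowing, because the paper never writes one out. The theorem is obtained by taking the proof of Theorem 3 of \cite{khadiev2018probability}, which establishes a bound of the form $p_M \le 4a^2\big(\sum_{i,j\in M,\, i\sim j} c_{ij}^2 + 2D^{\bar M} + 2|E_M|\big)$ for the normalised stationary state whose amplitudes on arcs leaving unmarked vertices all equal $a$, and replacing the erroneous claim $a=1/\sqrt{2m}$ by the bound $a\le\bar a = 1/\sqrt{2m-2|E_M|-D^{\bar M}}$ established in the preceding subsection. Your outline (keep the original three-way arc decomposition and bookkeeping, then substitute $\bar a$) is exactly this, so the strategy is the right one.

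Two points need repair, though. First, your opening ``general principle'' is stated backwards: the squared overlap $|\braket{\psi_{ST}}{\psi_0}|^2$ does \emph{not} lower-bound the probability of finding a \emph{marked} vertex. If it did, a stationary state with overlap close to $1$ would force $p_M$ close to $1$, which is the opposite of what this theorem (and the whole theory of exceptional configurations) asserts. The correct principle is that the component of $\ket{\psi_0}$ along $\ket{\psi_{ST}}$ is frozen by the evolution, so $p_M(t)$ is bounded \emph{above} by (a constant times) the marked-arc mass of $\ket{\psi_{ST}}$ plus $\|\ket{\psi_0}-\ket{\psi_{ST}}\|^2$; your later sentences do use this correct direction, so only the first sentence is at fault, but as written it contradicts the very statement being proved. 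Second, the step you single out as ``requiring the most care'' --- that replacing the true normalisation by the $\bar a$-bound only weakens the bound --- is immediate rather than delicate: the bracket $\sum c_{ij}^2 + 2D^{\bar M} + 2|E_M|$ is non-negative and does not depend on $a$ (it is the same bracket in the original and in the corrected statement; the $c_{ij}$ are fixed by the neutralisation conditions inside $M$ and do not scale with $a$), so the right-hand side is monotone increasing in $a^2$ and the inequality $a\le\bar a$ can be substituted directly. No domination argument comparing the auxiliary state $\ket{\psi}$ with the optimal stationary state is required beyond the norm computation already given in the derivation of $\bar a$: the arcs leaving unmarked vertices alone contribute $a^2(2m-2|E_M|-D^{\bar M})\le 1$ to the unit norm, which is precisely where $a\le\bar a$ comes from.
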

Note that the Theorem 3 covers also the Corollary 1 presented in \cite{khadiev2018probability}.


\bibliographystyle{ieeetr}
\bibliography{comment}

\end{document}